\newenvironment{claimproof}{\par\noindent\underline{Proof:}}{\leavevmode\unskip\penalty9999 \hbox{}\nobreak\hfill\quad\hbox{$\blacksquare$}}
\newcommand{\cgkernel}{\textsc{max-$c\gamma$-kernel}}
\newcommand{\cI}{\mathcal{I}}
\newcommand{\smti}{\textsc{Max-SMTI}}
\newcommand{\cminstab}{\textit{c$\gamma$-min stable}}
\newcommand{\cmaxstab}{\textit{c$\gamma$-max stable}}
\newcommand{\cgblocks}{$c\gamma$-blocks}
\newcommand{\cgstab}{$c\gamma$-stable}
\newcommand{\usmti}{\textsc{Max-$c\gamma$-smti}}
\newcommand{\mmatching}{common independent set}
\newcommand{\cgblocking}{$c\gamma$-blocking}
\newcommand{\pbDef}[3]{
     \begin{center}
     \colorbox{lightgray}{\begin{minipage}{0.95\linewidth}%
    \textsc{#1}%\\[0.2ex]  
     \begin{itemize}[leftmargin=1.1cm]      
       \item[\textbf{In:}]  #2%\\[0.2ex]
       \item[\textbf{Out:}]  #3
     \end{itemize}
    \end{minipage}
    }
    \end{center}
  
}
\title{A Simple 1.5-approximation Algorithm for a Wide Range of Maximum Size Stable Matching Problems}
 \author{  Gergely Cs\'{a}ji\inst{1,2}}
 \institute{
 ELTE E\"{o}tv\"{o}s Lor\'and University, Budapest, Hungary
 \and  HUN-REN Centre for Economic and Regional Studies, Budapest, Hungary
}
\begin{document}
	\maketitle
\begin{abstract}
We give a simple approximation algorithm for a common generalization of many previously studied extensions of the maximum size stable matching problem with ties. These generalizations include the existence of critical vertices in the graph, amongst whom we must match as much as possible, free edges, that cannot be blocking edges and $\Delta$-stabilities, which mean that for an edge to block, the improvement should be large enough on one or both sides. We also introduce other notions to generalize these even further, which allows our framework to capture many existing and future applications. We show that the edge duplicating technique allows us to treat these different types of generalizations simultaneously, while also making the algorithm, the proofs and the analysis much simpler and shorter than in previous approaches. In particular, we answer an open question by Askalidis et al. \cite{socialstable} about the existence of a $\frac{3}{2}$-approximation algorithm for the \smti\ problem with free edges. This demonstrates that this technique can grasp the underlying essence of these problems quite well and have the potential to be able to solve many future applications. 
\end{abstract}

\section{Introduction}
Preference based matching markets is an extensively studied topic in computer science, mathematics and economics literature. The intensive study of the area started back in 1962 after the seminal paper of Gale and Shapley \cite{gale1962college}. They defined the model for the stable marriage problem and showed that a stable matching always exists and can be found in linear time. Since then, countless applications and related models have been studied, see Manlove \cite{manlove2013algorithmics} for an overview. Applications include Resident allocation, University admissions, Kidney exchanges, job markets and much more.

\subsection{Related work}\label{sec:related}
In the stable marriage problem, the goal is to match two classes of agents in a way such that there are no blocking edges, i.e. pairs of agents who mutually consider each other better than their partner.
The \emph{stable marriage problem with ties and incomplete lists} (\textsc{smti}) was first studied by Iwama et al. ~\cite{IMMM99}, who showed the NP-hardness of \textsc{Max-SMTI}, which is the problem of finding a maximum size stable matching in an \textsc{SMTI} instance. Since then, various algorithms have been proposed to improve the approximation ratio, e.g. Iwama et al. \cite{IMY07, IMY08} and Király \cite{Kiraly11}, and the current best ratio is $\frac{3}{2}$ by a polynomial-time algorithm of McDermid \cite{Mcdermid09}, where the same ratio is attained by linear-time algorithms of Pauluch \cite{paluch2011faster, Paluch14} and Király \cite{kiraly2012linear, Kiraly13}.
%The current best lower bound is $\frac{33}{29}$ \cite{Yanagisawa07}. 
%Moreover, the integrality gap of a natural IP formulation for {\sc max-smti} is shown to be at least $\frac{3}{2}-o(1)$ in \cite{IMY14}, which rules out the possibility of further improvement by techniques such as rounding and primal-dual algorithms.
This $\frac{3}{2}$-approximation has been extended to critical relaxed stable matchings by a very recent paper of Nasre et al. \cite{critical-ties-approx}, which is a generalization of \smti, where there is a set of critical agents, amongst whom we must match as much as possible and stability is generalized in an appropriate way. Critical relaxed stability has also been studied in Krishnaa et al.  \cite{critical-strict-inapprox}, where they showed that a critical relaxed stable matching always exists, but finding a maximum size such matching is NP hard to approximate within $\frac{21}{19}-\varepsilon$, even with only strict preferences.

The approximation algorithm for \smti\ has also been extended to some cardinal and matroidal generalizations as well by Csáji et al. \cite{csaji2023approximation}. This include for example common quotas and $\Delta$-min-stability, where there is a number $\Delta$, and blocking is relaxed in a way such that only those edges can block, where both participating agents improve by at least $\Delta$. The matroidal generalization of stable matching was introduced in Fleiner \cite{fleiner2001matroid}.

As for the inapproximability of \textsc{Max-SMTI}, Halldórssson et al.  \cite{halldorsson2002inapproximability} showed that it  is NP-hard to approximate it within some constant factor. Later, inapproximability results have been improved, especially assuming stronger complexity theoretic conjectures. Yanagisawa \cite{yanagisawa2007approximation} and Halldórsson et al. \cite{yanigasawa2003improved} showed that assuming the Unique Games Conjecture, there is no $\frac{4}{3}-\varepsilon$-approximation for any $\varepsilon >0$. By a recent work by Dudycz et al. \cite{smti1.5inapprox} it follows that assuming the Small Set Expansion Hypothesis or the strong-UGC, there cannot even be a $\frac{3}{2}-\varepsilon$-approximation algorithm for \textsc{Max-SMTI}.

Stable matchings with free edges - that is, edges that can be included in a matching, but cannot block - have also been studied in many papers. Askalidis et al. \cite{socialstable} proved that with strict preferences and free edges, finding a maximum size stable matching can be $\frac{3}{2}$-approximated in polynomial time. However, on the negative side they showed that the problem does not admit a polynomial time $\frac{21}{19}-\varepsilon$ approximation algorithm unless $P= NP$ and assuming the Unique Games Conjecture, it does not even admit a $\frac{3}{2}-\varepsilon$ approximation. These negative results also hold with respect to strict preferences. They posed as an open question whether this $\frac{3}{2}$-approximation can be extended to weak preferences. In this paper we answer this question positively. Cechlárová and Fleiner \cite{cechlarova2009stable} investigated the stable roommates problem with free edges and proved NP-hardness even in very restricted settings. %Cseh and Heeger \cite{cseh2020stable} showed hardness of deciding the existence of a strongly stable matching with weak preferences and free edges. 
%Cardinal preferences in the context of stable matchings have been studied for several reasons. Pini et al. \cite{pini2013stability} analyzed manipulations consisting of falsely reporting preference values. Among other stability notions, they introduced \emph{$\alpha$-stability}, which is equivalent to our definition of $\Delta$-min-stability.  Anshelevich et al. \cite{anshelevich2013anarchy} considered approximate stability from the point of view of social welfare. They defined various utility models, and $\alpha$-stability with respect to these models. They gave price-of-anarchy bounds that depend on the value of $\alpha$. Chen et al. \cite{chen2021matchings} defined \emph{local $d$-near-stability} and \emph{global $d$-near-stability} based on swaps in the preference orders. The former can be seen as a special case of our $\Delta$-min-stability notion with $\Delta$ being integer and the preferences being ordinal. They proved hardness of approximation results regarding the minimum value $d$, such that there exists a locally/globally $d$-nearly-stable matching that is complete or has welfare higher than some number $\eta$. 

\subsection{Our Contributions}
In this paper we extend the $\frac{3}{2}$-approximation algorithm for \smti\ for a common generalization of many previously introduced and studied concepts. In particular, we answer an open question posed by Askalidis et al. \cite{socialstable} about the existence of a $\frac{3}{2}$-approximation algorithm for the \smti\ problem with free edges. We utilize a recently introduced edge duplicating technique by Yokoi \cite{yokoi2021approximation} and Csáji et al. \cite{csaji2023approximation} and show that with the help of this technique, we are able to solve much harder and more general related problems in a lot simpler and more elegant way. We give a simple approximation algorithm for a framework that includes free edges, critical vertices, critical edges, capacities, matroid constraints, many cardinal stability notions and more. First, we describe our algorithm for the one-to-one bipartite matching case in Section \ref{sec:one-to-one}. Then, we extend it to the most general case with matroidal constraints in Section \ref{sec:matroid}.

Although both our algorithm and our analysis are quite simple and elegant, we emphasize that our method can solve many problems that previously took whole papers in highly rated journals and conferences to tackle and also answers a decade long open question.

\section{Preliminaries}

We investigate matching markets, where the set of agents with the possible set of contracts is given by a bipartite graph $G=(U,W;E)$, where vertices correspond to agents and edges correspond to possible, mutually acceptable contracts. For each agent $v\in U\cup W$, let $E(v)$ denote the edges that are incident to $v$. We assume that for each agent $v$, there is a \textit{preference function} $p_v:E(v) \to \mathbb{R}_{\ge 0}$, which defines a weak ranking over the incident edges of $v$. We emphasize that we assume that the agents (vertices) rank their incident edges instead of the adjacent agents on the other side, because we allow parallel edges in our model, representing multiple types of contracts between two given agents. We also assume that $p_v(\emptyset )\le 0$, which denotes that an agent always weakly prefers to be matched to any acceptable partners rather than being unmatched. %Furthermore, for each vertex $v$, there is also an integral capacity $q(v)\in \mathbb{N}$.

We say that an edge set $M\subseteq E$ is a \textit{matching}, if $|M\cap E(v)|\le 1$ for each $v\in U\cup W$. For a vertex $v$, let $M(v)$ denote the edge incident to $v$ in $M$, if there is any, otherwise $M(v)$ is $\emptyset$. %By abuse of notation, we sometimes refer to the partner of $v$ in $M$ as $M(v)$ too. 
We say that an edge $e=(u,w)\notin M$ \textit{blocks} a matching $M$, if $p_w(e)>p_w(M(w))$ and $p_u(e)>p_u(M(u))$ holds, that is, both agents strictly prefer each other to their partners in $M$. 
A matching $M$ is called \textit{weakly stable} or just \textit{stable}, if there is no blocking edge to $M$. The problem of finding a maximum size weakly stable matching in a bipartite graph with weak preferences, called \smti\ is a well studied problem that is NP-hard even to approximate within some constant, but admits a linear time $\frac{3}{2}$-approximation. 

We continue by introducing the generalizations that we consider. We start with $\Delta$-min and $\Delta$-max stability, which were introduced in Csáji et al. \cite{csaji2023approximation} (a similar notion to $\Delta$-min stability has also been defined in Chen et al. \cite{localstable}). The motivation behind these notions is that in many applications, blocking pairs may only arise if one or both member of the pair improves by a significant enough amount.

Let $\Delta >0$ be a constant.
A matching $M$ is $\Delta $-\textit{min-stable}, if there is no blocking edge $e=(u,w)$ with respect to $M$, such that $\min \{ p_u(e)-p_u(M(u)),p_w(e)-p_w(M(w))\} \ge \Delta$, that is, both agents improve by at least $\Delta$. The motivation behind this is clear: if a blocking does not result in a significant improvement, it may not even worth the effort. 
%A matching $M$ is $\Delta $-\textit{sum-stable}, if there is no blocking edge $(u,w)$ with respect to $M$, such that $p_u(w)-p_u(M(u))+p_w(u)-p_w(M(w))\ge \Delta$.

A matching $M$ is $\Delta $-\textit{max-stable}, if there is no blocking edge $e=(u,w)$ with respect to $M$, such that $\max \{ p_u(e)-p_u(M(u)),p_w(e)-p_w(M(w))\} \ge \Delta$, that is, one of the improvements is at least $\Delta$. Note here, that the edge $e$ still must be a blocking edge, so both $p_u(e)-p_u(M(u))>0$  and $p_w(e)-p_w(M(w))>0$ are assumed. The motivation behind this notion is that in many cases, it may be enough that one agent can improve by a sufficiently large amount to initiate a blocking contract, and the other agent only has to accept it, so for him even a smaller improvement is enough.
%We will use the notation $w_1>_u^rw_2$ to denote that $p_u(w_1)\ge p_u(w_2)+r$ for some $r>0$. 
By choosing $\Delta$ to be small enough, it is easy to see that both of these stability notions strictly generalize weak stability.

Now we define critical matchings and relaxed stability, as they were defined in a very recent work of Krishnaa et al. \cite{critical-strict-inapprox}. For this, let us suppose that there is a set $C\subseteq U\cup W$ of \textit{critical} vertices. Criticality means that we want to match as many of them as possible in any matching. We say that a matching $M$ is \textit{critical}, if there is no other matching $M'$, such that $M'$ covers strictly more vertices from $C$, than $M$. Of course, there may not be a matching that is both critical and weakly stable. Hence, we need a weaker notion of stability, called \textit{relaxed stability}. A matching $M$ is \textit{relaxed stable}, if there is no blocking pair $e=(u,w)$ to $M$, such that $M(u)$ and $M(w)$ are not critical vertices. The motivation behind this definition is that since we want to maximize the number of matched critical agents, we do not allow blocking pairs, where one of the participating agents would leave a critical agent unmatched. We will slightly strengthen this notion of stability for the following reason: this does not allow a pair consisting of an unmatched critical agent and another agent whose partner is critical to block. However, after letting the agents of such a blocking pair to be matched together, the number of critical agents covered remains to same, (so maximal), hence such a pair indeed should be able to block. We will show that this strengthening of stability is still enough to guarantee existence.

Now we generalize the concept of criticality and relaxed stability in a natural way.
Suppose that there is a set $E_c$ of critical edges too. This represents that it may be also important that we match the critical agents with specific types of contracts, or to certain other agents that are good enough for them, or more compatible with them in some way. For a matching $M$, let $M_c=M\cap E_c$ denote the critical edges of $M$. We say that a matching $M$ is \textit{critical}, if there is no matching $N$, such that $N_c$ covers strictly more critical vertices than $M_c$. A matching $M$ is \textit{critical relaxed stable}, if it is critical and there is no blocking edge $e=(u,w)$, such that $M\setminus \{ (u,M(u)),(M(w),w)\} \cup \{ e\}$ is still critical. The special case of only critical vertices can be obtained by choosing $E=E_c$. %We also strengthen the notion of relaxed stability a little in a natural way, because we allow an agent that was matched to a critical agent to block with an unmatched critical agent for example. 

Finally we introduce the notion of free agents and free edges. In Askalidis et al. \cite{socialstable}, free edges were used to model stable matching instances, where there are friendship and business relations which restrict certain edges from blocking. Let us suppose there is a set $F\subseteq E$ of free edges and a set $F_A\subseteq U\cup W$ of free agents. Free agents cannot participate in any blocking edge, while free edges cannot be blocking edges. Hence, a matching $M$ is \textit{stable} in this instance, if for each blocking edge $(u,w)$ it holds that $|\{ u,w\} \cap F_A|\ge 1$ or $(u,w)\in F$. Clearly, we can model free agents with free edges only: just make each edge adjacent to a free agent a free edge. 

%Before we put these together, let us generalize these notions even further. Suppose that we are not only given a set $C\subset U\cup W$ of critical vertices, but also a set $C_E\subset E$ of critical edges. We say that a matching $M$ is \textit{critical} in this model, if there is no other matching $N$ such that $N\cap C_E$ matches strictly more critical vertices than $M\cap C_E$. So now we only account the critical vertices that are covered with critical edges. We say that a matching is \textit{relaxed-stable}, if for every  blocking edge $(u,w)$, one of them is mathed to a critical vertex with a critical edge. Of course in the case if $C_E=E$, this just reduces to the original critical relaxed stability notion. Also, we can assume without loss of generality that each edge $e\in C_E$ is adjacent to at least one critical vertex.

Next we define a nice common generalization of $\Delta$-min/max stablity and the notion of free edges. Here we suppose that for each edge $e=(u,w)$, there are two numbers $\gamma_e^u>0$ and $\gamma_e^w>0$ given. We say that a matching is \textit{$\gamma$-min stable}, if there is no blocking edge $e=(u,w)$ such that $p_u(e)-p_u(M(u))\ge \gamma_e^u$ and $p_w(e)-p_w(M(w))\ge \gamma_e^w$ holds. Similarly, a matching is \textit{$\gamma$-max stable}, if there is no blocking edge $e=(u,w)$ such that $p_u(e)-p_u(M(u))\ge \gamma_e^u$ or $p_w(e)-p_w(M(w))\ge \gamma_e^w$ holds. The case of $\Delta$-stabilites with a set $F\subseteq E$ of free edges corresponds to the very special case, when $\gamma_e^v=\infty$, if $e\in F$ and $\gamma_e^v=\Delta$ otherwise. Hence, in our model, we allow very different types of conditions for each edge to block, independently from each other, which can incorporate many other special properties of certain applications. For example, if one thinks about job markets, and assumes that the underlying preferences are in correspondence with the salaries of the positions, there may be many other aspects of a workplace that make it desirable. Hence, for each agent and each different company and position, the increase in the salary that would make a job offer good enough for the applicant to switch, might be different.  Similary, depending on a company's existing employees and a new agent, it may depend on the specific skills of a new agent that are relevant for the company, how large of a salary within a contract would be worth it for the company to switch from an existing employee to the new one in a blocking pair.

Putting it all together, we define a general model, which incorporates all of the previously discussed ones. Suppose we are given a set $C\subseteq U\cup W$ of critical vertices, a set $E_c\subseteq E$ of critical edges and $\gamma_e^v$ values for each pair $(e,v)\in E\times (U\cup W)$ such that $v\in e$. We say that a matching $M$ is \cminstab, if $M$ is critical and there is no blocking edge $e=(u,w)$, such that $M\setminus \{ (u,M(u)),(M(w),w)\} \cup \{ e\}$ is still critical, and $\min \{ p_u(e)-p_u(M(u))-\gamma_e^u,p_w(e)-p_w(M(w))-\gamma_e^w\} \ge 0$ both hold. Otherwise, we call such a blocking edge a \textit{$c\gamma$-min blocking edge}. Clearly, this generalizes all three concepts: $\Delta$-min stability, free edges and critical vertices. Similarly we can define \cmaxstab\  matchings by replacing $\min \{ p_u(e)-p_u(M(u))-\gamma_e^u,p_w(e)-p_w(M(w))-\gamma_e^w\} \ge 0$ with $\max \{ p_u(e)-p_u(M(u))-\gamma_e^u,p_w(e)-p_w(M(w))-\gamma_e^w\} \ge 0$.

In order to be able to solve both problems with the same algorithm, let us go even one step further and consider a common generalization of the above two problems that we describe now. Instead of one $\gamma_e^u$, we have values $0<\gamma_e^u < \delta_e^u$ for each vertex-edge pair. We say that an edge $e=(u,w)$ \textit{\cgblocks} a matching $M$, if $M\setminus \{ (u,M(u)),(M(w),w)\} \cup \{ e\}$ is still critical, and either $\min \{ p_u(e)-p_u(M(u))- \gamma_e^u, p_w(e)-p_w(M(w))- \delta_e^w\} \ge 0$ or $\min \{ p_u(e)-p_u(M(u))- \delta_e^u,p_w(e)-p_w(M(w))-\gamma_e^w\} \ge 0$ holds. We say that a matching $M$ is \textit{\cgstab}, if $M$ is critical and there is no edge that \cgblocks\ $M$.

If the $\gamma_e^v$ values are sufficiently small, then this corresonds to $c\gamma$-max stability, and when the $\gamma_e^v$ and $\delta_e^v$ values are sufficently close, then it corresponds to $c\gamma$-min stability.

Now we define the optimization problem we investigate, called \textsc{Maximum  $c\gamma$-stable matching with ties, incomplete preferences and critical edges} abbriviated as \usmti\ for short. 
\pbDef{\usmti}{
A bipartite graph $G=(U,W;E)$, $p_v()$ preference valuations for each $v\in U\cup W$, a set $C\subseteq U\cup W$ of critical vertices, a set $E_c\subseteq E$ of critical edges, numbers $0<\gamma_e^v<\delta_e^v$ for each pair $(e,v)\in E\times (U\cup W)$ such that $v\in e$.
}{
A maximum size \cgstab\ matching $M$}

The main result of the paper is a simple $\frac{3}{2}$-approximation algorithm for \usmti.

\section{The Algorithm}\label{sec:one-to-one}
We start by describing a high level view of the algorithm.  Let $I$ be an instance of \usmti.

1. Create an instance $I'$ of the stable marriage problem with strict preferences by making parallel copies of each edge and create strict preferences over the created edges.  

2. Run the Gale-Shapley algorithm to obtain a stable matching $M'$ in the new instance $I'$

3. Take the projection $M$ of $M'$ to $I$ by taking an edge $e$ inside $M$, whenever one of the parallel copies of $e$ was inside $M'$.

The intuition behind this algorithm is the following. Our goal is a stable matching that is as large as possible. Hence, to help agents not to remain alone, we create copies for each edge, that allow rejected and unmatched agents multiple new chances to propose again to the same partner with a new, stronger contract that is more favorable to the other agent and can be good enough for her to reject another partner who has been considered better than the proposing agent before.

\subsection{The case of $E_c=E$ }

Let us start with the special case when each edge is critical, so only the number of covered critical vertices matters. Also, this means that $M_c=M$ for any matching $M$. The more general case will be solved by the same simple ideas, however, the many required types of copies make it more difficult to follow.
We describe the algorithm in more detail as follows. Let $I$ be an instance of \usmti. Let $|C\cap U|=s$ and $|C\cap W|=t$. For each edge $e=(u,w)\in E$ we create parallel edges as follows
\begin{itemize}
    \item [--] We create copies $a(e)$, $b_0(e),b_1(e)$ and $c(e)$,
    %\item [--] If $e\in E_c$ is critical, we create copies $x(e),z(e)$,
    \item [--] If $w\in C\cap W$ is critical we create copies $x_1(e),\dots,x_t(e)$, %and a copy $p(e)$,
    \item [--] If $u\in C\cap U$ is critical we create copies $z_1(e),\dots, z_s(e)$.% and $q(e)$,
    %\item [--] Finally, if $e\in E_c$ is critical and both $u,w\in C$ are critical, we create a copy $y(e)$
\end{itemize}
Then, we create strict preferences as follows. For a vertex $u\in U$, we rank the copies according to the rule 

$x_1 \succ \dots \succ x_t \succ a\succeq^{\gamma} b_0\succeq^{\delta -\gamma} b_1\succ c\succ z_s \succ \dots \succ z_1$. \\
For a vertex $w\in W$, we rank the copies according to the rule

$z_1\succ \dots \succ z_s\succ c\succeq^{\gamma}b_1\succeq^{\delta -\gamma} b_0\succ a\succ x_t\succ \dots \succ x_1$. \\
For any non-$b_i$ copy, we rank the edges of the same copy according to the preference functions $p_v()$ by breaking the ties arbitrarily. Here, $\alpha \succ \beta$ denotes that for any two edges $e,f$, the copy $\alpha (e)$ is ranked higher than the copy $\beta (f)$. 

The notation $a \succeq^{\gamma} b_0 \succeq^{\delta-\gamma}b_1$ means that for any edge $e$, $a(e)\succ_vb_0(e)\succ_vb_1(e)$, but $b_0 (f)\succ_v  a(e)$, if and only if $p_v(f)\ge p_v(e)+\gamma_f^v$  % which also implies $b_1(f)\succ_v b_0(e)$, if an only if $p_v(f)\ge p_v(e) + \delta_f^v - \gamma_e^v$. 
and $b_1(f)\succ_va(e)$ if and only if $p_v(f)\ge p_v(e)+\delta_f^v$. This can be obtained in the following way. For each vertex $u\in U$ rank the $a$ copies according to $p_v()$ by breaking the ties arbitrarily. Then, we can insert the $b_0$ copy of each edge $f$, such that $b_0(f)\succ_ua(e)$, if and only if $p_u(f)\ge p_u(e)+\gamma_f^u$. Similarly we can insert the $b_1$ copies such that $b_1(f)\succ_ua(e)$ if and only if $p_u(f)\ge p_u(e)+\delta_f^u$. %and $b_1(f)\succ_v b_0(e)$, if an only if $p_v(f)\ge p_v(e) + \delta_f^v - \gamma_f^v$.
For the $W$ side, we do it similarly, according to their ranking.
Note that this definition allows the $b_i$ copies to be ranked in a different order than $p_v()$, which is surprising, but necessary. If there are still some ties remaining, we break them arbitrarily.

We remark another way to create the ranking over the $a,b_0,b_1,c$ copies. For a vertex $u\in U$ and $e=(u,w)$ let $p_u(a(e)) = p_u(e)$, $p_u(b_0(e)) = p_u(e)-\gamma_e^u$ and $p_u(b_1(e)) = p_u(e)-\delta_e^u$. Then, for each vertex, these new (possibly negative) preference number define a weak order, in which we break ties in a way such that with the same value the $b_1$ copies are best, the $b_0$ copies are second and the $a$ copies are last. We can do it similary for the $W$ side, with the tiebreaking $b_0$ first, $b_1$ second, $c$ best. 

We illustrate the creation of the extended preferences on a concrete example with the second method. Let $s=t=1$. Let $u\in U$ be an agent with three adjacent edges $e,f,g$. Let $p_u(e) =1$, $p_u(f)=3$ and $p_u(g)=4$. Also, let $\gamma_e^u=1,\gamma_f^u=2, \gamma_g^u=2$ and $\delta_e^u=2,\delta_f^u=3,\delta_g^u=6$. Then,
$p_u(a(g)) = 4, p_u(a(f)) = 3, p_u(b_0(g)) = 2, p_u(b_0(f))=p_u(a(e))=1, p_u(b_1(f))=p_u(b_0(e))=0, p_u(b_1(e))=-1, p_u(b_1(g))=-2$.
Then, the ranking between the copies is the following:

$x_1(g)\succ_u x_1(f)\succ_u x_1(e)\succ_u a(g) \succ_u a(f) \succ_u b_0(g)  \succ_u b_0(f) \succ_u a(e) \succ_u b_1(f) \succ_u b_0(e) \succ_u b_1(e) \succ_u b_1(g) \succ_u c(g) \succ_u c(f) \succ_u c(e) \succ_u z_1(g) \succ_u z_1(f) \succ_u z_1(e)$.

Notice that here $b_1(f)\succ_u b_1(e)\succ_u b_1(g)$, so for the $b_1$ copy, the edges are not ranked in their original order.

\begin{theorem}
\usmti\ can be $\frac{3}{2}$-approximated in $\mathcal{O}((s+t)|E|)$ time, if $E=E_c$.
\end{theorem}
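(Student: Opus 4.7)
The plan is to verify three things: running time, feasibility of the output $M$ (critical and $c\gamma$-stable), and the $\tfrac{3}{2}$-approximation bound. The running time is immediate: $I'$ has at most $(4+s+t)|E|$ edges, so strict-preference Gale-Shapley on $I'$ runs in $\mathcal{O}((s+t)|E|)$ time.

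For $c\gamma$-stability of $M$, I would argue by contradiction. Suppose an edge $e=(u,w)$ $c\gamma$-blocks $M$; by the symmetry between the two blocking disjuncts I may assume $p_u(e)-p_u(M(u))\ge\gamma_e^u$ and $p_w(e)-p_w(M(w))\ge\delta_e^w$. The insertion rule $b_0(f)\succ_u a(e')$ iff $p_u(f)\ge p_u(e')+\gamma_f^u$, together with the class ordering $a\succ b_0\succ b_1\succ c$ on the $U$-side, implies $b_0(e)\succ_u M'(u)$ whenever $M'(u)$ is a copy of $M(u)$ of type $a$, $b_0$, $b_1$, or $c$. The symmetric insertion rule on the $W$-side with threshold $\delta_e^w$ gives $b_0(e)\succ_w M'(w)$. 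Hence $b_0(e)$ blocks $M'$, contradicting the stability guarantee of Gale-Shapley; the alternative blocking disjunct uses $b_1(e)$ in place of $b_0(e)$. Corner cases where $M'(u)$ (or $M'(w)$) is an $x$- or $z$-copy are handled by using the corresponding $x$- or $z$-copy of $e$ as the blocking witness and invoking the criticality-preservation clause of $c\gamma$-blocking.

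For criticality, if some matching $N$ covered strictly more critical vertices than $M$, Berge's theorem would give an $M$-alternating path $P$ whose endpoints lie in $C$ and are unmatched by $M$. Since the $x$-copies dominate all of $a, b_0, b_1, c$ on the $U$-side (and $z$-copies symmetrically on the $W$-side), during Gale-Shapley every $u$ exhausts its $t$ many $x_i$-proposals to each critical neighbor before turning to any other copy. Traversing $P$ and applying the stability of $M'$ edge-by-edge yields a chain of rejections that must end in an accepted $x$- or $z$-copy at one of $P$'s critical endpoints, contradicting its being unmatched in $M$.

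For the approximation ratio, let $N$ be any $c\gamma$-stable matching. The components of $M\triangle N$ are alternating paths and even alternating cycles, so only an augmenting path with exactly three $N$-edges can push the $|N|/|M|$ ratio on a component above $\tfrac{3}{2}$. For such a path $w_0-u_1-w_1-u_2-w_2-u_3$ with $w_0,u_3$ unmatched by $M$, I apply the $c\gamma$-stability of $M$ just established to each of the three $N$-edges and combine the resulting preference inequalities with the thresholds $\gamma_e^v<\delta_e^v$ captured by the $b_0, b_1$ copies to exhibit a $c\gamma$-blocking witness of $M$ at the middle $N$-edge, a contradiction. Paths ending at critical vertices are handled in parallel using the $x$- and $z$-copies. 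The main obstacle is precisely this last step: the two asymmetric thresholds together with up to four possible copy types realizing $M'$ at each internal vertex of the path yield a sizeable case analysis, but the edge-duplication framework keeps every case local to one edge's insertion rules and reduces it to reading off the defining inequalities.
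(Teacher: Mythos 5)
Your running-time analysis and the overall strategy (reduce to strict preferences in $I'$, run Gale--Shapley, argue criticality, $c\gamma$-stability, and the ratio separately) match the paper. The stability argument (Claim 2) is essentially on the right track, though the ``corner cases'' remark is a bit loose: when $u$ and $w$ are both matched, $M'(u)$ cannot be an $x$-copy at all, because that would mean $M(u)$'s $W$-endpoint is critical and then removing it would destroy criticality, contradicting that $e$ $c\gamma$-blocks $M$; the real remaining cases are when one of $u,w$ is unmatched, which require a separate argument via the $x_1$/$z_1$/$a$/$c$ copies, not the $b$-copies. The criticality argument also needs tightening: only one endpoint of the improving path need be critical and unmatched by $M$, and the iteration along the path must be phrased via the stability of $M'$ and the hierarchy $z_1\prec\cdots\prec z_s$ (for $U$) and $z_1\succ\cdots\succ z_s$ (for $W$), as the paper does; the ``exhausts its $t$ many $x_i$-proposals to each critical neighbor'' sentence conflates the role of the $x$- and $z$-copies.

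The serious gap is in the approximation-ratio step. First, the path structure is wrong: a component of $M\triangle N$ with exactly three $N$-edges and two $M$-edges has ratio exactly $\frac{3}{2}$, not above it, so such a path is not what $|N|>\frac{3}{2}|M|$ forces. What is forced is a component with two $N$-edges and one $M$-edge (or a single $N$-edge, which is ruled out by maximality of $M$). This is the $w_1$--$u_1$--$w_2$--$u_2$ path with $f=(u_1,w_1),g=(u_2,w_2)\in N$ and $e=(u_1,w_2)\in M$. Second, even granting a path, the claim that you can ``apply the $c\gamma$-stability of $M$ just established to each of the three $N$-edges and combine the resulting preference inequalities'' does not work: $c\gamma$-stability of $M$ gives only negations of conjunctions (``$f$ does not $c\gamma$-block $M$''), which do not yield the positive inequalities needed. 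You must instead invoke the Gale--Shapley stability of $M'$ in $I'$: since $a(f)$ does not block $M'$, the copy $M'(u_1)$ is $a(e)$, $b_0(e)$ or $b_1(e)$; since $c(g)$ does not block $M'$, the copy $M'(w_2)$ is $c(e)$, $b_0(e)$ or $b_1(e)$; hence $M'$ contains $b_0(e)$ or $b_1(e)$, and this forces $p_{u_1}(e)\ge p_{u_1}(f)+\gamma_e^{u_1}$, $p_{w_2}(e)\ge p_{w_2}(g)+\delta_e^{w_2}$ (or the symmetric pair). Finally, the contradiction you should reach is that $e$ $c\gamma$-blocks $N$ (using also that $e$ and $f\cup g$ cover the same critical vertices because both $M$ and $N$ are critical), not that something blocks $M$; you already know $M$ is $c\gamma$-stable, so trying to conjure a blocker for $M$ out of the facts at hand is the wrong direction.
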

\begin{proof}
We prove the theorem in three simple claims. Let $M$ denote the output of the algorithm and $M'$ be its preimage in the extended instance $I'$.

\begin{claim}
$M$ is critical.
\end{claim}
\begin{claimproof}
Suppose for contradiction that there is a matching $N$, such that $N$ covers strictly more critical vertices than $M$. Then, it must hold that either $N$ covers more critical vertices from $U$ or more critical vertices form $W$ than $M$. 

Suppose that the first case holds. Then, there is a component $P$ (a path) in $N\cup M$, such that there is an endpoint $u_1\in U\cap C$ that is critical, but is only matched in $N$, and also more critical vertices are covered in $N\cap P$, than in $M\cap P$. Let $e_1=(u_1,w_1)=N(u_1)$. As $u_1$ is critical, we know that the copies $z_1(u_1,w_1),\dots, z_s(u_1,w_1)$ exist. As $u_1$ is not matched in $M'$, the fact that $z_1(u_1,w_1)$ does not block $M'$ implies that $w_1$ is matched in $M'$ and with a $z_1$-type edge. Let $z_1(u_2,w_1)=M'(w_1)$. This immediately implies that $u_2\in U\cap C$, as $z_1(u_2,w_1)$ exists. By our assumption on the component (more critical vertices are covered in $N\cap P$, than in $M\cap P$), $u_2$ is matched in $N$ to a vertex $w_2$.

As any $z_2$ copy is better for $u_2$ in $I'$, but $z_2(u_2,w_2)$ exists and does not block, $z_j(u_3,w_2)\in M'$ for some $u_3\in U$ and $j\le 2$. Again, we obtain that $u_3$ is critical, so by our assumption on the component, $u_3$ is matched in $N$ to a vertex $w_3$. By iterating this argument, we get that there must be vertices $u_1,u_2,\dots , u_{s+1}\in U$ that are critical, which contradicts $|U\cap C|=s$.

The second case is analogous. 
\end{claimproof}

\begin{claim}
The output matching $M$ by the algorithm is \cgstab.

\end{claim}
\begin{claimproof}
Suppose for contradiction that there is a \cgblocking\ edge $e=(u,w)$ to $M$. 

If $u$ is unmatched, then $g=(u',w)=M(w)$ exists. As $e$ \cgblocks\ and $M$ is critical, we get that $u\in C$ if and only if $u'\in C$. Hence, the best copy of $e$ and $g$ for $w$ has the same type, either $z_1(e)$ or $c(e)$. In both cases, we get that this copy of $e$ blocks $M'$ as these copies are both ranked according to $p_w()$, contradiction. If $w$ is unmatched, $x_1(e)$ or $a(e)$ blocks $M'$ for similar reasons, contradiction.

Hence, $M(u)=f=(u,w')$ and $M(w)=g=(u',w)$ is not $\emptyset$. Because $e$ \cgblocks, $p_u(e)\ge p_u(f)+\gamma_e^u$ and $p_w(e)\ge p_w(g)+\delta_e^w$ or $p_u(e)\ge p_u(f)+\delta_e^u$ and $p_w(e)\ge p_w(g)+\gamma_e^w$. 
Also, as $e$ \cgblocks, the vertices $u',w'$ cannot be critical, so $x_1(f),\dots,x_t(f),z_1(g),\dots,z_s(g)$ do not exist. Hence, in the first case $b_0(e)$ and in the second case $b_1(e)$ blocks $M'$, contradiction.

\end{claimproof}

\begin{claim}
    For any \cminstab\ matching $N$ it holds that $|N|\le \frac{3}{2}|M|$.
\end{claim}
\begin{claimproof}
    Suppose for contradiction that there is a \cgstab\ matching $N$ such that $|N|>\frac{3}{2}|M|$. Then, there must be a path component in $N\cup M$ that is either a single $N$-edge, or consists of two $N$-edges and one $M$-edge. The first case is clearly impossible, as $M$ is necessarily maximal. 

    Suppose the second case holds. Let $e=(u_1,w_2)$ be the edge of $M$ and $f=(u_1,w_1)$, $g=(u_2,w_2)$ be the edges of $N$. 

    First observe that as $M$ and $N$ are critical, $e$ and $f\cup g$ cover the same number of critical vertices, so $w_1,u_2$ are not critical and $x_1(f),\dots,x_t(f),z_1(g),\dots,z_s(g)$ does not exists.

    As $a(f)$ does not block, $b_0(e),b_1(e)$, $a(e)$ or $x_i(e)\in M'$ for some $i$. As $c(g)$ does not block, $b_0(e),b_1(e)$,$c(e)$ or $z_j(e)\in M'$ for some $j$. Hence, $b_0(e)$ or $b_1(e)\in M'$. If $b_0(e)\in M'$, then $p_{u_1}(e)\ge p_{u_1}(f)+\gamma_e^{u_1}$ and $p_{w_2}(e)\ge p_{w_2}(g)+\delta_e^{w_2}$ as $M'$ is stable, so $e$ \cgblocks\ $N$ (combining with the fact that $e$ and $f\cup g$ cover the same number of critical vertices), contradiction. The other case implies $p_{u_1}(e)\ge p_{u_1}(f)+\delta_e^{u_1}$ and $p_{w_2}(e)\ge p_{w_2}(g)+\gamma_e^{w_2}$, so $e$ \cgblocks\ $N$, contradiction again.
    
    %Suppose $e\in E_c$. Then, $f$ or $g$ is also critical.

   % If $f,g\in E_c$, then we get that $u_1,w_2\in C$, $u_2,w_1\notin C$. In this case, since $x(f)$ does not block $M'$, we get that $x(e),y(e)$ or $z_1(e)\in M'$. Similarly, as $z(g)$ does not block $M'$, $z(e),y(e)$ or $x_{t+1}(e)\in M'$. Hence, $y(e)\in M'$ and $p_{u_1}(e)\ge p_{u_1}(f)+\gamma_e^{u_1}$ and $p_{w_2}(e)\ge p_{w_2}(g)+\gamma_e^{w_2}$. But this contradicts that $N$ is \cminstab.

%    Finally, suppose that $f\in E_c,g\notin E_c$ (the case $f\notin E_c,g\in E_c$ is analogous). Here, $w_1\in C$ if and only if $w_2\in C$ must hold as $M$ and $N$ are critical. If they are critical, then as $x_1(f)$ and $c(g)$ do not block $M'$, we get that $p(e)\in M'$ and again, $e$ $c\gamma$-min blocks $N$, contradiction. If they are not critical, then $y(e)$ does not exists. As $x(f)$ does not block, $x(e)\in M'$. But then $c(g)$ would block $M'$, contradiction.
\end{claimproof}

The statement follows from these three claims. The running time of the algorithm is linear in the number of edges of the extended instance, as the Gale-Shapley algorithm is linear, so it has running time $\mathcal{O}((s+t)|E|)$.
\end{proof}

\subsection{General case}

Next, we discuss the more general case with an arbirtary subset of critical edges.

We may assume without loss of generality that each critical edge $E_c$ is adjacent to at least one critical vertex, otherwise it cannot cover any critical vertices, so we get an equivalent instance by making the edge non-critical.

Let us start by describing the extended instance $I'$. For each edge $e=(u,w)\in E$ we create parallel edges as follows:
\begin{itemize}
    \item [--] We create copies $a(e)$, $b_0(e),b_1(e)$ and $c(e)$,
    \item [--] If $w\in C\cap W$ is critical and $e\in E_c$ is also critical, we create copies $x_1(e),\dots,x_{t+7}(e)$
    \item [--] If $u\in C\cap U$ is critical and $e\in E_c$ is also critical, we create copies $z_1(e),\dots, z_{s+7}(e)$
    \item [--] Finally, if $e\in E_c$ is critical and both $u,w\in C$ are critical, we create a copies $y_0(e),y_1(e)$
\end{itemize}

Then, we create strict preferences as follows. Let $\delta' = \delta -\gamma$. For a vertex $u\in U$, we rank the copies according to the rule

$x_1\succeq^{\gamma}x_2\succeq^{\delta'}x_3\succ x_4\succ \dots \succ x_{t+4}\succ z_{s+7}\succeq^{\gamma}(y_0\succeq z_{s+6})\succeq^{\delta'} (y_1\succeq z_{s+5})\succ z_{s+4}\succ \dots \succ z_4\succ a\succeq^{\gamma}(b_0\succeq z_3\succeq x_{t+5})\succeq^{\delta'}( b_1\succeq z_2\succeq x_{t+6})\succ z_1\succ x_{t+7}\succ c$. \\
For a vertex $w\in W$, we rank the copies according to the rule 

$z_1\succeq^{\gamma}z_2\succeq^{\delta'}z_3\succ z_4\succ \dots \succ z_{s+4}\succ x_{t+7}\succeq^{\gamma}(y_1\succeq x_{t+6})\succeq^{\delta'} (y_0\succeq x_{t+5})\succ x_{t+4}\succ \dots \succ x_4\succ c\succeq^{\gamma}(b_1\succeq x_3\succeq z_{s+5})\succeq^{\delta'} (b_0\succeq x_2\succeq z_{s+6})\succ x_1\succ z_{s+7}\succ a$. \\

Here, when we denote $(\alpha \succeq \beta)$ for some types of copies, then it always means that we insert $\alpha (e)$ and $\beta (e)$ together with $\beta (e)$ being strictly after $\alpha (e)$ (so no other edge comes between them) and we keep them together, so in the end for any other edge, either it is worse than both of them or better than both of them. In the model where we assign preference values to the copies to create a new ranking, this corresponds to that these copies always get the same value, so they are tied, and then we keep them together in the tiebreaking in that order.

The only important thing is that, if we have some copies $\alpha \succeq^{\gamma} ( \beta_1 \succeq \cdots \succeq \beta_k) \succeq^{\delta'} (\omega_1 \succeq  \cdots \succeq \omega_l )$, then for any edge $e$, $\alpha (e) \succ_v \beta_1(e) \succ_v \cdots \succ_v \beta_k(e) \succ_v \omega_1(e) \succ_v \cdots \succ_v \omega_l(e)$ and that for any $i\in [\{ 1,\dots,k\}$ and $j\in \{ 1,\dots,l\}$ it holds that $\beta_i (f) \succ_v \alpha(e)$, if and only if $p_v(f)\ge p_v(e)+\gamma_f^v$ and $\omega_j (f)\succ_v \alpha (e)$ if and only if $p_v(f)\succ p_v(e)+\delta_f^v$.

The edges from copies $x_1,x_4,x_5,\dots,x_{t+4},x_{t+7},z_1,z_4,z_5\dots,z_{s+4},$ $z_{s+7},a,c$ are ranked according to $p_v()$, the others are then inserted to the right place as before. In the preference value model, we do this by computing the corresponding values similarly as in the previous case (with each copy in a parenthesis getting the same value) and for tie breaking we use a suitable rule. For example for a vertex $u\in U$, amongst the copies $a,b_0,b_1,z_2,z_3,x_{t+5},x_{t+6}$, we can use $b_1\succ_u z_2\succ_u x_{t+6}\succ_u b_0\succ_u z_3\succ_u x_{t+5} \succ_u a$. The ties between the same type of copies can be broken arbitrarily.

\begin{theorem}
\usmti\ can be $\frac{3}{2}$-approximated in $\mathcal{O}((s+t)|E|)$ time.
\end{theorem}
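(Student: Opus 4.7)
The plan is to prove the theorem via the same three-claim template used in the $E_c = E$ case. Let $M$ denote the output of the algorithm and $M'$ its preimage in the extended instance $I'$. I would establish: (i) $M$ is critical, (ii) $M$ is \cgstab, and (iii) $|N| \le \frac{3}{2}|M|$ for every \cgstab\ matching $N$. The running time bound is immediate since Gale--Shapley runs in linear time and the extended instance has $|E(I')| = \mathcal{O}((s+t)|E|)$ edges.

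For criticality, I would reproduce the chain-propagation argument, now restricted to critical edges: if $N$ were a matching with $N_c$ covering strictly more critical vertices than $M_c$, pick a witnessing alternating component in $M_c \cup N_c$ beginning at an unmatched critical vertex, say $u_1 \in C \cap U$. The families $x_1,\dots,x_{t+4}$ and $z_{s+5},\dots,z_{s+7}$ at the top of the $U$-preferences, their $W$-side analogs, and the $y_0, y_1$ copies for critical edges with two critical endpoints are designed so that Gale--Shapley stability of $M'$ forces each step of the chain to land on a critical-edge copy incident to another critical vertex of the correct side. The bounds $s+7$ and $t+7$ on the number of $z$- and $x$-copies then ensure that any such chain longer than $s$ (resp.\ $t$) would require repetition of a critical $U$-vertex (resp.\ $W$-vertex), contradicting $|C \cap U| = s$ (resp.\ $|C \cap W| = t$).

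To show $M$ is \cgstab, assume some edge $e = (u,w)$ \cgblocks\ $M$; by definition the swap around $e$ keeps the matching critical. A case split on whether $u, w$ are matched in $M$, whether $e \in E_c$, and whether $u, w \in C$ reduces to showing that one of $b_0(e), b_1(e), y_0(e), y_1(e), a(e), c(e), x_i(e), z_j(e)$ must block $M'$. The $\succeq^{\gamma}$ and $\succeq^{\delta'}$ markers in the preference list are engineered precisely so that the threshold inequalities defining \cgblocking\ coincide with those needed to block $M'$ at the appropriate copy. For the approximation ratio, I would rule out augmenting paths of length $3$ in $M \cup N$: if $e=(u_1,w_2)\in M$ and $f=(u_1,w_1), g=(u_2,w_2)\in N$ form such a path, criticality of $M$ and $N$ forces $e$ and $f \cup g$ to cover the same critical vertices, so the relevant high-priority copies of $f$ and $g$ either do not exist or are dominated. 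Gale--Shapley stability applied to the non-blocking of $a(f)$ and $c(g)$ then forces one of $b_0(e), b_1(e), y_0(e), y_1(e)$ to lie in $M'$, and translating back shows $e$ \cgblocks\ $N$, the desired contradiction.

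The main obstacle is not a new idea but combinatorial bookkeeping: each edge now sits in one of four criticality categories (critical or not, times criticality of each endpoint), and these categories interact with the $\gamma$/$\delta$ thresholds independently on the two sides. The interleaving of the $x$, $y$, $z$ copies of a critical edge across three separate blocks of the preference list, together with the $\succeq^{\gamma}$ and $\succeq^{\delta'}$ markers that couple $b_i$ copies with suitably ranked $x$- and $z$-copies, is precisely what lets a single uniform blocking argument cover every case. The real work is verifying that the preference-insertion process yields the claimed conditional orderings, in particular that for each pair of edges the correct $b_i$ or $y_i$ copy becomes preferred at exactly the right improvement threshold, but this is a direct adaptation of the $E_c = E$ analysis.
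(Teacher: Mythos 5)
Your proposal reproduces the paper's own proof at a high level: the same extended instance $I'$, the same three claims (criticality of $M$, $c\gamma$-stability of $M$, and $|N|\le\frac{3}{2}|M|$), and the same chain-propagation and blocking arguments. The only imprecision is in your sketch of Claim 3: in the mixed cases $f\in E_c$, $g\notin E_c$ (and symmetrically), the copy forced into $M'$ is one of $x_2(e),x_3(e)$ or $z_{s+5}(e),z_{s+6}(e)$ rather than $b_0,b_1,y_0,y_1$, and the relevant non-blocking copies are $x_1(f)$ or $z_{s+7}(f)$ rather than $a(f)$ and $c(g)$ --- but this is bookkeeping within the same case analysis, not a different method.
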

\begin{proof}
We prove the theorem in three simple claims. Let $M$ denote the output of the algorithm and $M'$ be its preimage in the extended instance $I'$.

\begin{claim}
$M$ is critical.
\end{claim}
\begin{claimproof}
Suppose for contradiction that there is a matching $N$, such that $N_c$ covers strictly more critical vertices than $M_c$. Then, it must hold that either $N_c$ covers more critical vertices from $U$ or more critical vertices from $W$ than $M_c$. 

Suppose that the first case holds. Then, there is a component (a path) in $N_c\cup M_c$, such that there is an endpoint $u_1\in U\cap C$ that is critical, but is only covered in $N_c$ and also more critical vertices are covered in $N_c\cap P$, than in $M_c\cap P$. Let $e_1=(u_1,w_1)=N_c(u_1)$. As $u_1$ is critical, we know that the copies $z_1(u_1,w_1),\dots, z_{s+7}(u_1,w_1)$ exist. As $u_1$ is not covered with a critical edge in $M$ (as it is uncovered in $M_c$), it can only be matched with an $a,b_i$ or $c$ type copy. The fact that $z_4(u_1,w_1)$ does not block $M'$ implies that $w_1$ is matched in $M'$ and with a $z_j$-type edge with $j\le 4$. Let $(u_2,w_1)=M(w_1)$. This immediately implies that $u_2\in U\cap C$ and $(u_2,w_1)\in E_c$, as $z_j(u_2,w_1)$ exists. By our assumption on the component (more critical vertices are covered in $N_c\cap P$, than in $M_c\cap P$), $u_2$ is matched in $N_c$ to a vertex $w_2$.

As any $z_5$ copy is better for $u_2$ in $I'$ than any $z_j$ copy with $j\le 4$, but $z_5(u_2,w_2)$ exists and does not block, $z_j(u_3,w_2)\in M'$ for some $u_3\in U$ and $j\le 5$. Again, we obtain that $u_3$ is critical and $(u_3,w_2)\in E_c$, so by our assumption on the component, $u_3$ is matched in $N_c$ to a vertex $w_3$. By iterating this argument, we get that there must be vertices $u_1,u_2,\dots , u_{s+1}\in U$ that are critical, which contradicts $|U\cap C|=s$.

The second case is analogous. 
\end{claimproof}

\begin{claim}
The output matching $M$ by the algorithm is \cgstab.

\end{claim}
\begin{claimproof}
Suppose for contradiction that there is a \cgblocking\ edge $e=(u,w)$ to $M$. 

If $u$ is unmatched, then $g=M(w)=(u',w)$ exists. As $e$ \cgblocks\ and $M$ is critical we get that $e\in E_c$ if and only if $g\in E_c$ and if yes, then $u'\in C$ if and only if $u\in C$. In particular, we get that the best copy of $e$ and $g$ in $I'$ for $w$ is the same. Also, this copy can be only $c,x_{t+7}$ or $z_1$ and all these copies are ranked according to $p_w()$. Therefore, this copy of $e$ blocks $M'$, as $e$ \cgblocks\ $M$, contradiction. The case when $w$ is unmatched is similar.

Hence, $M(u)=f=(u,w')$ and $M(w)=g=(u',w)$ is not $\emptyset$. Because $e$ \cgblocks, $p_u(e)\ge p_u(f)+\gamma_e^u$ and $p_w(e)\ge p_w(g)+\delta_e^w$ or $p_u(e)\ge p_u(f)+\delta_e^u$ and $p_w(e)\ge p_w(g)+\gamma_e^w$. 

If $f,g\notin E_c$, then $e\notin E_c$, as $M$ is critical. Hence, only $a,b_0,b_1,c$ copies exists of $e,f,g$. As $e$ \cgblocks\ $M$, we get that $b_0(e)$ or $b_1(e)$ blocks $M'$, contradiction.

Suppose that $f,g\in E_c$. Then, $e\in E_c$, $u,w\in C$ and $u',w'\notin C$, because $e$ \cgblocks. Hence, $x_i(f),z_j(g)$ do not exists for any $i,j$, but $y_0(e),y_1(e)$ exist. As $e$ \cgblocks\, one of $y_0(e),y_1(e)$ blocks $M'$, contradiction.

Suppose that $f\in E_c,g\notin E_c$. In particular, only $a(g),b_0(g),b_1(g),c(g)$ exist of $g$. Then, $e\in E_c$ as $e$ \cgblocks\ and $w'\in C$ if and only if $w\in C$. If $w,w'\in C$, then $x_2(e),x_3(e)$ exist and one of them blocks $M'$, contradiction. If $w,w'\notin C$, then $u\in C$ (by our assumption for any $e\in E_c$, at least one endpoint is critical) and $x_i(e),x_i(f)$ do not exists for any $i$ and neither does $y_0(e),y_0(f),y_1(e),y_1(f)$. Hence, $z_{s+5}(e)$ or $z_{s+6}(e)$ blocks $M'$, contradiction.

The case $f\notin E_c,g\in E_c$ is analogous. 

\end{claimproof}

\begin{claim}
    For any \cminstab\ matching $N$ it holds that $|N|\le \frac{3}{2}|M|$.
\end{claim}
\begin{claimproof}
    Suppose for contradiction that there is a \cgstab\ matching $N$ such that $|N|>\frac{3}{2}|M|$. Then, there must be a path component in $N\cup M$ that is either a single $N$-edge, or consist of two $N$-edges and one $M$-edge. The first case is clearly impossible, as $M$ is necessarily maximal. 

    Suppose the second case holds. Let $e=(u_1,w_2)$ be the edge of $M$ and $f=(u_1,w_1)$, $g=(u_2,w_2)$ be the edges of $N$. 

    First observe that as $M$ and $N$ are critical, $e$ and $f\cup g$ cover the same number of critical vertices with critical edges. Hence, if $e$ $\gamma$-blocks $N$, then it \cgblocks\ $N$.

Suppose that $f,g\notin E_c$. Then, $e\notin E_c$ and only $a,b_0,b_1,c$ copies exists of all three edges. As $a(f)$ does not block $M'$, $b_0(e),b_1(e)$ or $a(e)\in M'$. As $c(g)$ does not block $M'$, $b_0(e),b_1(e)$ or $c(e)\in M'.$ Hence, $b_0(e)$ or $b_1(e)\in M'$. Furthermore, if $b_0(e)\in M'$, then $p_{u_1}(e)\ge p_{u_1}(f)+\gamma_e^{u_1}$ and $p_{w_2}(e)\ge p_{w_2}(g)+\delta_e^{w_2}$, so $e$ \cgblocks\ $N$ (combining with the fact that $e$ and $f\cup g$ cover the same number of critical vertices), contradiction. The other case implies $p_{u_1}(e)\ge p_{u_1}(f)+\delta_e^{u_1}$ and $p_{w_2}(e)\ge p_{w_2}(g)+\gamma_e^{w_2}$, so $e$ \cgblocks\ $N$, contradiction again.

Next suppose that $f,g\in E_c$. Then, as $M$ is critical, $e\in E_c$, $u_1,w_2\in C$ and $u_2,w_1\notin C$. Hence, $x_i(f),z_j(g)$ does not exists for any $i,j$ but $y_0(e),y_1(e)$ exist and also $z_j(f),x_i(g)$ exists for all $i,j$. As $z_{s+7}(f)$ does not block and $x_{t+7}(g)$ does not block $M'$ we get that $y_0(e)\in M'$ or $y_1(e)\in M'$ and in both cases, $e$ \cgblocks\ $N$, contradiction.

Suppose that $f\in E_c,g\notin E_c$. Then, as $M,N$ are critical, $e\in E_c$ and $w_1\in C$ if and only if $w_2\in C$. If $w_1,w_2\in C$, then $x_1(f)$ exists. As $x_1(f)$ does not block and $c(g)$ does not block $M'$, $x_2(e)\in M'$ or $x_3(e)\in M'$. In both cases, $e$ \cgblocks\ $N$, contradiction. If $w_1,w_2,\notin C$, then $u_1\in C$ (each $e\in E_c$ has one critical endpoint by our assumption) and $y_0(e),y_1(e),x_i(e)$ does not exists for $i\in \{ 1,\dots, t+7\}$, but $z_j(e),z_j(f)$ does for $j\in \{ 1,\dots, s+7\}$. As neither $z_{s+7}(f)$ nor $c(g)$ blocks $M'$, we get that $z_{s+6}(e)\in M'$ or $z_{s+5}(e)\in M'$. In both cases, we get that $e$ \cgblocks\ $N$, contradiction.

The case $f\notin E_c,g\in E_c$ is similar.

\end{claimproof}

The statement follows from these three claims. The running time of the algorithm is linear in the edges of the extended instance, as the Gale-Shapley algorithm is linear, so it has running time $\mathcal{O}((s+t)|E|)$.
\end{proof}

We state one remark about a straightforward extension of this algorithm.

\begin{remark}
If we can have $k$ $(\gamma_e^v)_1<(\gamma_e^v)_2<\dots <(\gamma_e^v)_k$ values instead of the two $\gamma_e^v,\delta_e^v$ values, and let $e=(u,w)$ block if $p_u(e)\ge p_u(M(u))+(\gamma_e^u)_i$ and $p_w(e)\ge p_w(M(w))+(\gamma_e^w)_{k+1-i}$ for some $i\in \{ 1,\dots,k\}$, then this framework can also incorporate a notion of $\gamma$-sum stability. In $\gamma$-sum stability, we have one $\gamma_e$ value for each edge, and $(u,w)$ $\gamma$-sum blocks, if it blocks and the sum of the two improvements are at least $\gamma_e$. With $k=\mathcal{O}(|U\cup W|^2)$ copies, we can have $(\gamma_e^v)_1<(\gamma_e^v)_2<\dots <(\gamma_e^v)_k$ values for each possible improvement $c$ and $\gamma_e-c$ such that $(\gamma_e^v)_i=\gamma_e-(\gamma_e^v)_{k+1-i}$, so $\gamma$-sum stability becomes a special case. 
Furthermore, the same algorithm straightforwardly extends to this case, we only need to make $k$ copies instead of $2$ for the needed cases and define the strict ranking in $I'$ according to the differences $(\gamma_e^v)_{i+1}-(\gamma_e^v)_i$.
\end{remark}

\section{General Case with Matroid Constraints}\label{sec:matroid}
Let us also extend the framework for two sided matroid constraints (which generalizes capacities and laminar common quotas too). For this, we first define matroids formally.

\subsection{Matroids}
A {\em matroid} is a pair $(E, \cI)$ of a finite ground set $E$ and a nonempty family $\cI\subseteq 2^E$ satisfying the following two axioms: (i) $A\subseteq B\in \cI$ implies $A\in \cI$, and (ii) for any $A, B\in \cI$ with $|A|<|B|$, there is an element $x\in B\setminus A$ with $A+x\in \cI$. A set in $\cI$ is called an {\em independent set}, and an inclusion-wise maximal one is called a {\em base}. By axiom (ii), all bases have the same size, which is called the {\em rank} of the matroid.
%Let $M=(S,\cI)$ be a matroid, where $S$ is the ground set and $\cI$ is the family of independent sets. Recall that a base is a maximum size independent set, while 
A {\em circuit} is an inclusion-wise minimal dependent set. The \emph{fundamental circuit} of an element $x\in S\setminus B$ for a base $B$, denoted by $C_B(x)$, is the unique circuit in $B+x$. By a slight abuse of notation, we will also use $C_I(x)$ for an independent set $I$ and an element $x\in E\setminus I$ to denote the unique circuit in $I+x$ if it exists. %Any pair of circuits satisfies the following property.

%\begin{proposition}[Strong circuit axiom]
%If $C,C'$ are circuits, $x\in C\setminus C'$, and $y\in C \cap C'$, then there is a circuit $C'' \subseteq C \cup C'$ such that $x \in C''$ and $y \notin C''$. \qed 
%\end{proposition}

We define some simple operations of matroids that also produce matroids. 
Let $\mathcal{M}_1=(E_1,\cI_1), \mathcal{M}_2=(E_2,\cI_2)$ be two matroids. The {\em direct sum} $\mathcal{M}_1\oplus \mathcal{M}_2$ is the matroid, where the ground set is $E_1\cup E_2$ and a set $I\subseteq E_1\cup E_2$ is independent, if and only if $I\cap E_1\in \cI_1$ and $I\cap E_2\in \cI_2$.

The {\em truncation} of a matroid $\mathcal{M}=(E,\cI)$ to size $k$ is the matroid, where the ground set is $E$ and a set $I\subseteq E$ is independent if and only if $I\in \cI$ and $|I|\le k$. For a subset $J\subseteq E$, the {\em deletion of $J$} defines the matroid $\mathcal{M}\setminus J$, whose ground set is $E\setminus J$ and $I\subseteq E\setminus J$ is independent if and only if it is independent is $\mathcal{M}$. Finally, the {\em contraction of an independent set $J$} defines a matroid $\mathcal{M}/J$, whose ground set is similarly $E\setminus J$, and a set $I$ is independent, if and only if $I\cup J\in \cI$.

If we have a total order $\succ$ given on $E$, then the triple $M=(E,\cI,\succ)$ is called a \emph{(totally) ordered matroid}. A nice property of totally ordered matroids is that for any weight vector $w\in \mathbb{R}^E$ that satisfies $w_x>w_y \Leftrightarrow x \succ y$, the unique maximum weight base is the same. We call this base $A$ the \emph{optimal base} of $(E,\cI,\succ)$; it is characterized by the property that the worst element of $C_A(x)$ is $x$ for any $x \in S \setminus A$. 
We need the following theorem about ordered matroids from \cite{csaji2023approximation}. 
\begin{theorem}{(Csáji, Király, Yokoi)}
\label{thm:matching}
Let $\mathcal{M}=(E,\mathcal{I},\succ)$ be an ordered matroid of rank $r$. Let $A$ be the optimal base and $B$ be a base disjoint from $A$. Then, there is a perfect matching $a_ib_i$ $(i\in [r])$ between $A$ and $B$ such that $a_i\succ b_i$ and $B+a_i-b_i \in \mathcal{I}$ for every $i \in [r]$. 
\end{theorem}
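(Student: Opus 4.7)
My plan is to recast the theorem as a bipartite matching problem and verify Hall's marriage condition. Let $G$ be the bipartite graph on vertex classes $A$ and $B$ in which $a\sim b$ iff both $B-b+a\in\mathcal{I}$ and $a\succ b$. A perfect matching of $G$ is exactly the required structure, so it suffices to show $|N_G(S)|\ge|S|$ for every $S\subseteq A$.

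The first step is a structural lemma: \emph{for every circuit $C$ of $\mathcal{M}$, $\min_\succ C\notin A$.} Assume for contradiction $c^*:=\min_\succ C\in A$. For each $x\in C\setminus A$ (nonempty as $A$ is independent), the optimal-base characterization given in the excerpt forces $x=\min_\succ C_A(x)$; combined with $x\succ c^*$ this yields $c^*\notin C_A(x)$, so $A-c^*+x$ still contains the circuit $C_A(x)$ and is dependent, i.e.\ $x\in\mathrm{cl}(A-c^*)$. Hence $C-c^*\subseteq(A-c^*)\cup(C\setminus A)\subseteq\mathrm{cl}(A-c^*)$, and since $C$ is a circuit, $c^*\in\mathrm{cl}(C-c^*)\subseteq\mathrm{cl}(A-c^*)$, contradicting the independence of $A$.

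The second step is Hall's condition, proved by the symmetric closure argument on the $B$-side. Assume $|N_G(S)|<|S|$ and set $T:=B\setminus N_G(S)$. Then $|S\cup T|=|S|+r-|N_G(S)|>r$, so $S\cup T$ is dependent and contains a circuit $C$. By the lemma $c^*:=\min_\succ C\notin A\supseteq S$, so $c^*\in T$. If every $a\in C\cap S$ satisfied $c^*\notin C_B(a)$, equivalently $B-c^*+a\notin\mathcal{I}$, then each such $a$ lies in $\mathrm{cl}(B-c^*)$; combined with $(C\cap T)-c^*\subseteq B-c^*$ this gives $C-c^*\subseteq\mathrm{cl}(B-c^*)$ and hence $c^*\in\mathrm{cl}(C-c^*)\subseteq\mathrm{cl}(B-c^*)$, contradicting independence of $B$. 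So some $a\in C\cap S$ has $c^*\in C_B(a)$, i.e.\ $B-c^*+a\in\mathcal{I}$; since $a\neq c^*$ and $c^*$ is the $\succ$-minimum of the circuit $C$, also $a\succ c^*$. Thus $(a,c^*)$ is an edge of $G$, giving $c^*\in N_G(S)$ and contradicting $c^*\in T$. Hall's theorem then produces the desired perfect matching.

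The main obstacle is spotting the closure-based reformulation of the optimality condition that drives the lemma; once that reformulation is in hand, the verification of Hall's condition is the mirror-image argument obtained by swapping the roles of $A$ and $B$, and the remaining bookkeeping is routine.
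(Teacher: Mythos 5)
Your proof is correct. Both the structural lemma and the Hall verification are sound: the characterization of the optimal base as the one in which every $x\notin A$ is the $\succ$-minimum of $C_A(x)$ does imply that the $\succ$-minimum of any circuit lies outside $A$ (your closure argument is valid, using $c^*\in\mathrm{cl}(C-c^*)\subseteq\mathrm{cl}(A-c^*)$ to contradict independence of $A$), and the mirrored closure argument on the $B$ side correctly extracts from any alleged Hall violator $S\cup T$ an edge $(a,c^*)$ that lands $c^*$ in $N_G(S)$, a contradiction. The equivalence $c^*\in C_B(a)\Leftrightarrow B-c^*+a\in\mathcal{I}$ and the step $a\succ c^*$ (forced by $a\neq c^*$, $a,c^*\in C$, and $c^*=\min_\succ C$) are both handled correctly, and $C\cap S\neq\emptyset$ is automatic since $C\subseteq T\subseteq B$ would contradict independence of $B$.

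One point of context: the present paper does not prove this theorem at all; it imports it verbatim from Cs\'aji--Kir\'aly--Yokoi. So there is no internal proof to compare against. Your reduction to Hall's theorem on the exchange bipartite graph, with the key lemma ``no circuit has its $\succ$-minimum in the optimal base,'' is the standard route for this kind of ordered-exchange statement and is essentially the argument one would expect in the cited source. It is a complete, self-contained proof of the stated result.
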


We also need a well-known simple Lemma about matroids.
\begin{lemma}
\label{matroidfact}
    Let $I,J$ be two independent sets of a matroid $\mathcal{M}$ such that for each $e\in I$, $J+e\notin \mathcal{M}$. Then, for each subset $I'\subseteq I$ it holds that the union of the fundamental cycles of the elements in $I'$ contains at least $|I'|$ elements.
\end{lemma}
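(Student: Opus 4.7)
The plan is to argue directly via the span/rank structure of the matroid, without needing any exchange lemma. First I would note that $I \cap J = \emptyset$: any $e \in I \cap J$ would give $J+e = J \in \mathcal{I}$, contradicting the hypothesis. Hence for each $e \in I$ the fundamental circuit $C_J(e) \subseteq J+e$ exists and necessarily contains $e$ (since $J$ itself is independent). Writing $F(e) := C_J(e) \setminus \{e\} \subseteq J$, the goal becomes showing $|N| \ge |I'|$ for every $I' \subseteq I$, where $N := \bigcup_{e \in I'} F(e)$. This is the natural reading of the statement: since each $C_J(e)$ contains $e$, the union of fundamental circuits automatically contains $I'$, so the non-trivial content is about the $J$-portion.

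The key step is to observe that each $e \in I'$ lies in the span of $N$, because $F(e) + e = C_J(e)$ is a circuit and $F(e) \subseteq N$. Consequently $I' \subseteq \mathrm{span}(N)$, which forces $r(I' \cup N) = r(N)$. On the other hand $I' \subseteq I$ is independent, giving $r(I' \cup N) \ge r(I') = |I'|$, and $N \subseteq J$ is independent, giving $r(N) = |N|$. Chaining these yields $|N| = r(N) = r(I' \cup N) \ge |I'|$, as required. There is no genuine obstacle: the whole argument is a short consequence of the definitions of fundamental circuit, span and rank, and in particular does not require Theorem~\ref{thm:matching} or any deeper matroid exchange machinery.
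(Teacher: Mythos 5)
Your proof is correct, and it takes a slightly different but equally elementary route from the paper's. The paper argues by contradiction directly from the exchange axiom (axiom (ii)): if the union $N$ of the $J$-parts of the fundamental circuits had $|N| < |I'|$, then since $N \subseteq J$ and $I'$ are both independent, the exchange axiom gives some $e \in I' \setminus N$ with $N + e$ independent, contradicting the fact that $C_J(e) \subseteq N + e$ is a circuit. You instead route through rank and closure: each $e \in I'$ lies in $\mathrm{span}(N)$ because $C_J(e) \setminus \{e\} \subseteq N$ spans it, so $r(I' \cup N) = r(N) = |N|$, while $r(I' \cup N) \ge r(I') = |I'|$. The two arguments are two faces of the same fact; the paper's is a one-step exchange argument, yours is a short chain of rank (in)equalities. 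A minor point in your favour: you explicitly note that $I \cap J = \emptyset$ (so each $C_J(e)$ exists and contains $e$) and explicitly disambiguate that the nontrivial content of the lemma is about the $J$-portions of the circuits, both of which the paper leaves implicit — the paper's proof writes ``the union of the fundamental cycles \dots in $J$,'' silently using the same reading.
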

\begin{proof}
If for a subset $I'$, the union of the fundamental cycles of these elements $C_{I'}$ in $J$ would be smaller than $|I'|$, then by axiom (ii) of matroids, there would be an element $e\in I'$ such that $C_{I'}+e\in \cI$, which contradicts that the fundamental cycle of $e$ is in $C_{I'}$.
\end{proof}

An important class of matroids is {\em laminar matroids}. A matroid $\mathcal{M}=(E,\cI )$ is a {\em laminar matroid}, if there is a laminar family $\mathcal{S}$ of sets over $E$, together with integer numbers $q_S\in \mathbb{N}$ for each $S\in \mathcal{S}$, such that a subset $I\subseteq E$ is independent, if and only if $|I\cap S|\le q_S$ for each $S\in \mathcal{S}$. In particular, if a matroidal constraint comes from a simple vertex capacity $q(v)$, then it defines a laminar matroid with $\mathcal{S} = \{ E(v)\}$ and $q_S=q(v)$. 

\subsection{Extending the model to matroid constraints}

We continue by extending our model and stability notions for the case of matroidal constraints.
Here, for each agent $v\in U\cup W$, apart from the preference valuation $p_v()$, there is also a matroid $\mathcal{M}_v=(E_v,\cI_v)$ over $E(v)$. For each $v\in U\cup W$, let $r(v)$ be the rank of the matroid $\mathcal{M}_v$. By taking the direct sums of the matroids on both sides, and appending the preference lists of the vertices, we create two matroids $\mathcal{M}_U=(E,\cI_U)$ and $\mathcal{M}_W=(E,\cI_W)$.
Here, $M\subseteq E$ is called a \textit{\mmatching}, if $M\cap E(v)$ is independent in the matroid $\mathcal{M}_v$ for each $v\in U\cup W$ or equivalently, $M$ is independent in both $\mathcal{M}_U$ and $\mathcal{M}_W$.

Let $C\subseteq U\cup W$ denote the set of critical agents and $E_c\subseteq E$ the set of critical edges and suppose we are given numbers $0<\gamma_e^v<\delta_e^v$ for each edge $e$ and vertex $v\in e$ as before. Without loss of generality, we suppose that for each critical edge $e$, at least one of its endpoints is critical.
Again, let $M_c=M\cap E_c$.
We say that a \mmatching\ $M$ is \textit{critical}, if $M_c$ fills at least as many places of critical vertices as $N_c$ for any other \mmatching\ $N$. That is, $\sum_{v\in C}|M(v)\cap E_c|$ is maximum among all common independent sets.

We say that an edge $e=(u,w)$ \textit{\cgblocks} a \mmatching\ $M$, if there are $f\in M(u)\cup \{ \emptyset \},g\in M(w)\cup \{ \emptyset \}$ such that $M'=M\setminus \{ f,g\} \cup \{ e\}$ is also critical, $M-f+e\in \cI_U$, $M-g+e\in \cI_w$ and either $p_u(e)-p_u(f)\ge \gamma_e^u$ and $p_w(e)-p_w(g)\ge \delta_e^w$ or $p_u(e)-p_u(f)\ge \delta_e^u$ and $p_w(e)-p_w(g)\ge \gamma_e^w$ holds. $M$ is \textit{\cgstab}, if no edge \cgblocks\ $M$.

Hence, we obtain the following optimization problem:

\pbDef{\cgkernel}{
A bipartite graph $G=(U,W;E)$, $p_v()$ preference valuations for each $v\in U\cup W$, a set $C\subseteq U\cup W$ of critical vertices, a set $E_c\subseteq E$ of critical edges, matroids $\mathcal{M}_v$ for each $v\in U\cup W$ and numbers $0<\gamma_e^v<\delta_e^v$ for each pair $(e,v)\in E\times (U\cup W)$ such that $v\in e$.
}{
A maximum size \cgstab\ common independent set $M$.}

\subsection{Extending the algorithm}

Let $s=\sum_{u\in U\cap C}r(u)$ and $t=\sum_{w\in W\cap C}r(w)$ be the sum of ranks of the matroids of the critical agents in $U$ and $W$ respectively. In particular, $s,t$ are upper bounds on the number of places of critical vertices that can be covered in $U$ and $W$ respectively. 

The idea of the algorithm is the same again, but here, we use the fact that the stable matching problem can be solved in quadratic with Fleiner's algorithm (linear time in case of laminar matroids), even if each vertex has matroid constraints. 

Hence, the steps are:

1. Create an instance $I'$ with $G'=(U,W;E')$ of stable matching problem with matroid constraints and strict preferences by making parallel copies of each edge and create strict preferences over the created edges. 

2. Run Fleiner's algorithm to obtain a stable \mmatching\ $M'$ in the new instance $I'$.

3. Take the projection $M$ of $M'$ to $I$ by taking an edge $e$ inside $M$, whenever one of the parallel copies of $e$ was inside $M'$.

Let us start by describing the extended instance $I'$. For each edge $e=(u,w)\in E$ we create parallel edges as follows:
\begin{itemize}
    \item [--] We create copies $a(e)$, $b_0(e),b_1(e)$ and $c(e)$,
    \item [--] If $w\in C\cap W$ is critical and $e\in E_c$ is also critical, we create copies $x_1(e),\dots,x_{t+7}(e)$
    \item [--] If $u\in C\cap U$ is critical and $e\in E_c$ is also critical, we create copies $z_1(e),\dots, z_{s+7}(e)$
    \item [--] Finally, if $e\in E_c$ is critical and both $u,w\in C$ are critical, we create a copies $y_0(e),y_1(e)$
\end{itemize}

Then, we create the strict preferences according to the same rule as in the one-to-one matching case.
The matroids $\mathcal{M}'_v$ for the agents in the extended instance are created such that $M'\subset E'(v)$ is independent, if and only if it contains at most one parallel copy of each edge and its projection $M\subset E(v)$ is independent in $\mathcal{M}_v$. Therefore,
$M'\subseteq E'$ is a \mmatching\ if and only if it contains at most one copy of each edge and its projection $M\subseteq E$ is a \mmatching.

We proceed by proving the algorithm is a $\frac{3}{2}$-approximation even in this most general model.

\begin{theorem}
\cgkernel\ can be $\frac{3}{2}$-approximated in $\mathcal{O}((s+t)^2|E|^2)$ time even with arbitrary matroid constraints. If the matroid contraints are given by laminar matroids, then the running time is $\mathcal{O}((s+t)|E|)$.
\end{theorem}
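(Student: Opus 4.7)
The plan is to follow the same three-claim structure as in the one-to-one case, replacing each path-component argument in $N\cup M$ by an exchange argument built on Theorem~\ref{thm:matching} and Lemma~\ref{matroidfact}. Concretely, I will establish: (i) $M$ is critical; (ii) $M$ is \cgstab; (iii) any \cgstab\ \mmatching\ $N$ satisfies $|N|\le \frac{3}{2}|M|$.

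For (i), suppose for contradiction that some \mmatching\ $N$ covers strictly more critical slots than $M$, WLOG already on the $U$-side. I would mimic the chain argument from the previous proof, but instead of following a path in $N_c\cup M_c$ I would apply Theorem~\ref{thm:matching} to the ordered matroid $\mathcal{M}_U$ (under the strict preferences of $I'$) to pair each edge of $N_c$ with an exchange-compatible edge of $M_c$ on the $U$-side. Starting from a critical $u_1\in U\cap C$ with an $N_c$-slot not filled in $M_c$, the stability of $M'$ forces $z_j(u_1,w_1)$ not to block for each $j\le 4$; Lemma~\ref{matroidfact} then implies that $w_1$ must hold in $M'$ some $z_j$-copy of a critical edge incident to a new $u_2\in U\cap C$. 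Iterating with strictly increasing indices ($z_{j+1}$ is forced at the next step) produces $s+1$ distinct critical $U$-vertices, contradicting $s=\sum_{u\in U\cap C}r(u)$.

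For (ii), the case analysis transfers essentially verbatim from the one-to-one proof: if $e=(u,w)$ \cgblocks\ $M$ there exist $f,g$ with $M-f-g+e$ critical and independent in both $\mathcal{M}_U$ and $\mathcal{M}_W$; splitting on whether $e,f,g\in E_c$ and whether the displaced partners lie in $C$ produces a specific copy of $e$ (one of $a,b_0,b_1,c,x_i,z_j,y_0,y_1$) that blocks $M'$ in the matroidal sense, contradicting the stability of $M'$. For (iii), if $|N|>\frac{3}{2}|M|$, I would apply Theorem~\ref{thm:matching} twice — once to $(\mathcal{M}_U,\succ_U)$ and once to $(\mathcal{M}_W,\succ_W)$, each restricted to $M\cup N$ — to obtain two perfect pairings between edges of $M$ and edges of $N$. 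A standard double-counting argument then yields some $e=(u_1,w_2)\in M$ that is $U$-paired to $f=(u_1,w_1)\in N$ and $W$-paired to a distinct $g=(u_2,w_2)\in N$, at which point the one-to-one argument applies word-for-word to conclude that $e$ \cgblocks\ $N$.

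The hardest part will be (i): turning the linear path-chase into a chain traced through fundamental circuits of the vertex matroids $\mathcal{M}_v$ requires careful bookkeeping of which $z_j$-copy has been consumed at each stage, with Lemma~\ref{matroidfact} repeatedly invoked to justify that the exchange producing the next link remains feasible. The running time bound then follows from the extended instance containing $\mathcal{O}((s+t)|E|)$ edges: Fleiner's matroidal stable-matching algorithm runs in linear time when the matroids are laminar and in $\mathcal{O}(|E'|^2)$ time in general, giving the two stated bounds.
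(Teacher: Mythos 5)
Your high-level plan (three claims, mirroring the one-to-one case with matroidal exchange machinery) matches the paper's structure, and your treatment of Claims (ii) and (iii) captures the right ideas: (ii) really does transfer by the same case analysis, and for (iii) the paper indeed builds two auxiliary ordered matroids, invokes Theorem~\ref{thm:matching} on each, and double-counts to find an $e\in M\setminus N$ pigeonholed into both pairings. That said, for (iii) you gloss over a crucial technical step: the paper does not simply ``restrict to $M\cup N$'' — it first picks sets $N_U,N_W\subseteq N\setminus M$ with $M\cup N_i$ independent of the right size, then \emph{deletes} $E\setminus(M\cup N)$, \emph{contracts} $(M\cap N)\cup N_i$, and \emph{truncates} to $|M\setminus N|$ so that $M\setminus N$ and $N\setminus(M\cup N_i)$ become complementary bases, and only then defines a strict order $\succ^*_i$ on the surviving ground set via the copies in $I'$ (best existing copy of $f\in N_U\setminus M$ for $\mathcal{M}_W$, etc.) and verifies $M\setminus N$ is the \emph{optimal} base, using the stability of $M'$. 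Without the contraction/truncation you do not get two disjoint complementary bases, and Theorem~\ref{thm:matching} cannot be applied.

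The genuine gap is in Claim (i). You propose to use Theorem~\ref{thm:matching} to pair edges of $N_c$ with edges of $M_c$, but this cannot work directly: Theorem~\ref{thm:matching} requires two disjoint bases in a single ordered matroid with one of them optimal, whereas $M_c$ and $N_c$ are not bases (they need not even have the same size), and criticality is a statement about covered \emph{slots} of critical vertices rather than about independent-set optimality. The paper instead takes a different, more hands-on route: it constructs an auxiliary bipartite graph $H$ by blowing up each vertex $v$ into $|E(v)|$ copies, places a copy $M_c^H$ of $M_c$ as a matching in $H$, and then places $N_c$ as a matching $N_c^H$ satisfying carefully chosen properties (edges of $N_c$ that fit freely into $M_c(v)$ get a fresh vertex copy, edges that conflict get matched to the copy occupied by an exchangeable $M_c$-edge from its fundamental circuit). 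It is Lemma~\ref{matroidfact} — the Hall-type fact about fundamental circuits — not Theorem~\ref{thm:matching}, that certifies this assignment is possible. The chain-chase then happens in $H$, tracking distinct vertex \emph{copies}, and the contradiction comes from producing more than $s$ covered copies of critical $U$-vertices, not from running out of $z_j$-indices. Your description of the iteration as forcing ``strictly increasing indices'' is also imprecise: the $z_j$-index bound is used only to guarantee enough room, while the actual pigeonhole is on the $s$ available critical slots. Without the auxiliary graph $H$ (or some equivalent device for resolving which slot of $v$ each edge occupies), the ``chain traced through fundamental circuits'' you describe cannot be made rigorous, because a single vertex $v$ may hold several $M$-edges and several $N$-edges simultaneously, and you need a consistent way to decide which of them exchange with which.
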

\begin{proof}
We prove the theorem in three claims again. Let $M$ denote the output of the algorithm and $M'$ be its preimage in the extended instance $I'$.

\begin{claim}
$M$ is critical.
\end{claim}
\begin{claimproof}
Suppose for contradiction that there is a \mmatching\ $N$, such that $\sum_{v\in C}|M_c(v)|<\sum_{v\in C}|N_c(v)|$. Then, it must hold that either $N_c$ covers more places of critical vertices from $U$ or more from $W$ than $M_c$. Suppose that the first case holds.

We create a graph $H=(U',W',E_{M_c\cup N_c})$ by making $q(v)=|E(v)|$ copies $v^1,\dots,v^{q(v)}$ of each vertex $v\in U\cup W$. Then, for each edge $e=(u,w)\in M_c$ we add an edge $e'=(u^i,w^j)$ (for some copy $u^i$ of $u$ and $w^j$ of $w$) to $E_{M_c\cup N_c}$ such that these edges form a matching $M_c^H$ in $H$. As $|M\cap E(v)|\le q(v)$ $\forall v\in U\cup W$, this is clearly possible. Then, we add an edge $e'=(u^i,w^j)$ for each $e=(u,w)\in N_c$ such that 
\begin{enumerate}
    \item These edges form a matching $N_c^H$ in $H$,
    \item If $M_c(u)+e\in \cI_u$, then $u^i$ is not covered by $M_c^H$,
    \item If $M_c(w)+e\in \cI_w$, then $w^j$ is not covered by $M_c^H$,
    \item If $M_c(u)+e\notin \cI_u$, then $u^i$ is covered by a copy $f'\in M_c^H$ of an edge $f\in M_c$, such that $M_c(u)-f+e\in \cI_u$,
    \item If $M_c(w)+e\notin \cI_w$, then $w^j$ is covered by a copy $f'\in M_c^H$ of an edge $f\in M_c$, such that $M_c(w)-f+e\in \cI_w$.
\end{enumerate}

We can choose the endpoint in $U'$ for each edge such that 2 and 3 hold, because we have $q(v)=|E(v)|$ many options which is always sufficient. We can also choose the enpoint in $W'$ such that 4 and 5 hold, because if $M_c(w)+e\notin \cI_w$ or $M_c(u)+e\notin \cI_u$, then the fundamental cycle of $e$ in $M$ is included in $M_c$ and by Lemma \ref{matroidfact}, the Hall-property is satisfied, so we can match the edges to the vertex copies such that 4 and 5 will hold.% For 4, we also observe that if $M_c(u)+e\notin \mathcal{M}_u$, then the fundamental cycle of $e$ in $M(u)$ is included in $M_c(u)$, so we can use claim \ref{matroidfact}.

By our assumption, there must be a component $P$ in $N_c^H\cup M_c^H$ such that $N_c^H$ matches strictly more copies of critical vertices than $M_c^H$ in $U'$. We can conclude that there is a vertex $u_{k_1}\in U\cap C$ such that a copy $u_{k_1}^{i_1}$ is covered by $N_c^H$ with an edge $(u_{k_1}^{i_1},w_{l_1}^{j_1})$, but not by $M_c^H$. Let $e_1=(u_{k_1},w_{l_1})$. Hence, $M_c(u_{k_1})+e_1\in \cI_{u_{k_1}}$.

As $u_{k_1}$ is critical, we know that the copies $z_1(e_1),\dots, z_{s+7}(e_1)$ exist. As $M_c(u_{k_1})+e_1\in \cI_{u_{k_1}}$, we know that either $M(u_{k_1})+e\in \mathcal{M}_{u_{k_1}}$ or otherwise there is an edge $f_1\in M\setminus M_c$ such that $M(u_{k_1})-f_1+e_1 \in \cI_{u_{k_1}}$. In particular, we know that if there is such an edge $f_1$, then only $a,b_0,b_1,c$ copies exist of $f$. Hence, by the fact that $z_4(e_1)$ does not block $M'$ we get that $M(w_{l_1})+e_1\notin \cI_{w_{l_1}}$ and that for each edge $f$ in the fundamental circuit of $e_1$ in $\mathcal{M}_{w_{l_1}}$, $z_j(f)\in M'$ with $j\le 4$. This implies that for each edge $f=(u,w_{l_1})\in M(w_{l_1})$ in the fundamental cycle of $e_1$, $f\in E_c$ and $u\in C$. By property 5, in $M_c^H$ $w_{l_1}^{j_1}$ is matched to some $u_{k_2}^{i_2}$ ($k_2$ can be the same as $k_1$ but it is not important for the analysis) with $u_{k_2}\in C$ and $f_2=(u_{k_2},w_{l_1})\in E_c$. 
By our assumption on the component $P$ ($N_c^H$ matches more copies of critical vertices), $u_{k_2}^{i_2}$ is matched in $N_c^H$ to a vertex $w_{l_2}^{j_2}$. Let $e_2=(u_{k_2},w_{l_2})$.

By property 4, we know that $M_c(u_{k_2})+e_2\notin \cI_{u_{k_2}}$, so the fundamental cycle of $e_2$ in $\mathcal{M}_{u_{k_2}}$ is included in $M_c(u_{k_2})$ and $M_c(u_{k_2})+e_2-f_2\in \cI_{u_{k_2}}$.
As any $z_5$ copy is better for $u_{k_2}$ in $I'$ than any $z_j$ copy with $j\le 4$, but $z_5(e_2)$ exists and does not block, we have $M(w_{l_2})+e_2\notin \cI_{w_{l_2}}$ and for each $f\in M(w_{l_2})$ in the fundamental cycle of $e_2$ in $\mathcal{M}_{w_{l_2}}$, $z_j(f)\in M'$ with some $j\le 5$. Therefore, for each such edge $f=(u,w_{l_2})$, $f\in E_c$ and $u\in C$. By propety 5, the copy of $w_{l_2}$ covered by the image of $e_2$ in $N_c^H$ is also covered by a copy of one of these edges $f$, say $f_3=(u_{k_3},w_{l_2})$. By our assumption on $P$, the copy of $u_{k_3}$ matched by $f_3$ is also covered by $N_c^H$ with an edge $(u_{k_3}^{i_3},w_{l_3}^{j_3})$. Let $e_3=(u_{k_3},w_{l_3})$. Using property 4, we get that $f_3$ is in the fundamental cycle of $e_3$ in $M(u_{k_3})$.

By iterating this argument, we get that there must be more than $s$ vertices of $U'$ that are copies of critical vertices in $U$ and are covered by $N_c^H$, which is a contradiction, as at most $s$ places of the critical vertices in $U$ can be covered by a \mmatching.

The second case is analogous. 
\end{claimproof}

\begin{claim}
The output \mmatching\ $M$ by the algorithm is \cgstab.

\end{claim}
\begin{claimproof}
Suppose for contradiction that there is a \cgblocking\ edge $e=(u,w)$ to $M$ with $f\in M(u)\cup \{ \emptyset \}$ and $g\in M(w)\cup \{ \emptyset \}$.

If $M(u)+e\in \cI_u$, then $M(w)+e\notin \cI_w$, so $g=(u',w)\in M(w)\cup \{ \emptyset \}$ is not $\emptyset$. As $e$ \cgblocks\ and $M$ is critical we get that $e\in E_c$ if and only if $g\in E_c$ and if yes, then $u'\in C$ if and only if $u\in C$. In particular, we get that the best copy of $e$ and $g$ in $I'$ for $w$ is the same. Also, this copy can be only $c,x_{t+7}$ or $z_1$ and all these copies are ranked according to $p_w()$. Therefore, this copy of $e$ blocks $M'$, as $e$ \cgblocks\ $M$, contradiction. The case when $M(w)+e\in \mathcal{M}_w$ is similar.

Hence, $M(u)=f=(u,w')$ and $M(w)=g=(u',w)$ is not $\emptyset$. Because $e$ \cgblocks, $p_u(e)\ge p_u(f)+\gamma_e^u$ and $p_w(e)\ge p_w(g)+\delta_e^w$ or $p_u(e)\ge p_u(f)+\delta_e^u$ and $p_w(e)\ge p_w(g)+\gamma_e^w$. 

If $f,g\notin E_c$, then $e\notin E_c$, as $M$ is critical. Hence, only $a,b_0,b_1,c$ copies exists of $e,f,g$. As $e$ \cgblocks\ $M$, we get that $b_0(e)$ or $b_1(e)$ blocks $M'$, contradiction.

Suppose that $f,g\in E_c$. Then, $e\in E_c$, $u,w\in C$ and $u',w'\notin C$, because $e$ \cgblocks. Hence, $x_i(f),z_j(g)$ do not exists for any $i,j$, but $y_0(e),y_1(e)$ exists. As $e$ \cgblocks\, one of $y_0(e),y_1(e)$ blocks $M'$, contradiction.

Suppose that $f\in E_c,g\notin E_c$. In particular, only $a(g),b_0(g),b_1(g),c(g)$ exists of $g$. Then, $e\in E_c$ as $e$ \cgblocks\ and $w'\in C$ if and only if $w\in C$. If $w,w'\in C$, then $x_2(e),x_3(e)$ exists and one of them blocks $M'$, contradiction. If $w,w'\notin C$, then $x_i(e),x_i(f)$ do not exists for any $i$ and neither does $y_0(e),y_0(f),y_1(e),y_1(f)$. Hence, $z_{s+5}(e)$ or $z_{s+6}(e)$ blocks $M'$, contradiction.

The case $f\notin E_c,g\in E_c$ is analogous. 
\end{claimproof}

\begin{claim}
    For any \cminstab\ \mmatching\ $N$ it holds that $|N|\le \frac{3}{2}|M|$
\end{claim}
\begin{claimproof}
    Suppose for contradiction that there is a \cgstab\ \mmatching\ $N$ such that $|N|>\frac{3}{2}|M|$. 

Let $N_i$ be a subset of $N\setminus M$ such that $M \cup N_i \in \cI_i$ and $|M \cup N_i|=|N|$ for each $i\in \{U,W\}$. The sets $N_U$ and $N_W$ are disjoint because $M'$ is an inclusionwise maximal common independent set in $I'$. 

Let $\mathcal{M}^*_i$ be the matroid obtained from $\mathcal{M}_i$ by deleting $E\setminus (M \cup N)$, contracting $(M\cap N) \cup N_i$, and truncating to the size of $M\setminus N$. That is, the ground set of $\mathcal{M}_i^*$ is $E_i^*=(M\setminus N)\cup (N\setminus (M\cup N_i))$ and $\mathcal{M}_i^*=\{\, I\subseteq E^*_i: I\subseteq M\cup N,~ I\cup (M\cap N)\cup N_i \in \mathcal{M}_i,~ |I|\leq |M\setminus N|\,\}$. In $\mathcal{M}_i^*$, the sets $M_i^*\coloneqq M\setminus N$ and $N_i^*\coloneqq N \setminus (M \cup N_i)$ are bases that are complements of each other.

We define a strict preference order $\succ^*_i$ on $E_i^*$ in the following way. The elements of $N \setminus (M \cup N_U \cup N_W)$ are worst (in arbitrary order). On the remaining elements, we define the preferences based on the strict preferences created in $I'$ over $E'$. We do this as follows. For an edge $e\in N_U\setminus M$, we take the best existing copy for $\mathcal{M}_W$. For an edge $e\in N_W\setminus M$, we take the best existing copy for $\mathcal{M}_U$. Finally, for $e\in M\setminus N$, we take the copy included in $M'$. Then, we order these edges (first just within the sets $E(v)\cap (M\cup N_U\cup N_W)$) according to the preference orders induced on these copies in $I'$ and then extend this order by appending the constructed individual orders within the sets $E(v)\cap (M\cup N_U\cup N_W)$ after each other arbitrarily.

In the ordered matroid $\mathcal{M}_i^*=(E^*,\mathcal{I}^*_i,\succ^*_i$), $M_i^*$ is an optimal base. Indeed, $e$ is the worst element in the fundamental cycle of $e$ in $M_i^*$ for every $e\in N_i^*\cap (N_U\cup N_W)$, as otherwise a copy of $e$ would block $M'$, contradiction. It is also clear for the elements in $N_i^*\setminus (N_U\cup N_W)$ by the definition of $\succ^*_i$. By Theorem \ref{thm:matching}, there is a perfect matching $P_i$ for $i=U,W$ between $M_i^*$ and $N_i^*$ such that $e \succ_i^* f$ and $f$ is in the fundamental cycle of $e$ in $N_i^*$ for every $(e,f) \in P_i$, where $e \in M_i^*$ and $f\in N_i^*$.

Since $|N|>1.5|M|$ implies $|N_U^*|=|N_W^*|>|M\setminus N|/2$, there is an element $e \in M\setminus N$ that is covered by both $P_U$ and $P_W$. Let $(e,f) \in P_U$, $(e,g) \in P_W$ with $f=(u_1,w_1),e=(u_1,w_2),g=(u_2,w_2)$.

Note that this means that if $e'$ is the copy of $e$ in $M'$, then $e'$ is better than the best existing copy of $f$ for $u_1$ and $e'$ is better than the best existing copy of $g$ for $w_2$. Also, $N+e-f\in \cI_U$ and $N+e-g\in \cI_W$.

Suppose that $f,g\notin E_c$. Then, $e\notin E_c$ and only $a,b_0,b_1,c$ copies exists of all three edges, because otherwise $N+e-f-g$ is a \mmatching\ covering more places of critical vertices with critical edges, than $N$, contradicting that $N$ is critical. As $a(f)$ is not better than $e'$, $b_0(e),b_1(e)$ or $c(e)\in M'$. As $c(g)$ is not better than $e'$, $b_0(e),b_1(e)$ or $a(e)\in M'.$ Hence, $b_0(e)$ or $b_1(e)\in M'$. If $b_0(e)\in M'$, then $p_u(e)\ge p_u(f)+\gamma_e^u$ and $p_w(e)\ge p_w(g)+\delta_e^w$, so $e$ \cgblocks\ $N$ (combining with the fact that $e$ and $f\cup g$ both match 0 critical nodes), contradiction. The other case implies $p_u(e)\ge p_u(f)+\delta_e^u$ and $p_w(e)\ge p_w(g)+\gamma_e^w$, so $e$ \cgblocks\ $N$, contradiction again.

Next suppose that $f,g\in E_c$. If $x_1(f)$ exists, then as $e'$ is better than $x_1(f)$ for $u_1$, we get that $e'\in \{ x_1(e),x_2(e),x_3(e)\}$. Hence, 
as either $x_{t+7}(g)$ or $z_1(g)$ exists by $g\in E_c$, it cannot hold that $e'$ is better than the best existing copy of $g$, contradiction. Similarly we get that $z_1(g)$ does not exist, so $u_1,w_2\in C$, $u_2,w_1\notin C$. As $e'$ is better than $z_{s+7}(f)$ for $u_1$ and better than $x_{t+7}(g)$ for $w_2$ we get that $e'\in \{ y_0(e),y_1(e)\}$. In particular, $e\in E_c$ and $p_u(e)\ge p_u(f)+\gamma_e^u$,$p_w(e)\ge p_w(g)+\delta_e^w$ or $p_u(e)\ge p_u(f)+\delta_e^u$, $p_w(e)\ge p_w(g)+\gamma_e^w$. Hence, $e$ is \cgblocking\ $N$, contradiction.

Suppose that $f\in E_c,g\notin E_c$. If $w_1\in C$, then as $e'$ is better for $u_1$ than $x_1(f)$, $e'\in \{ x_1(e),x_2(e),x_3(e)\}$. It is also better than $c(g)$ for $w_2$, so $e'$ is either $x_2(e)$ or $x_3(e)$. In particular, $e\in E_c$ and $w_2\in C$, which means that $e$ \cgblocks\ $N$, contradiction. If $w_1\notin C$, then by the facts that $e'$ is better than $z_{s+7}(f)$ for $u_1$ and better than $c(g)$ for $w_2$, we get that $e'$ cannot be an $a,b_0,b_1$ or $c$ type copy.  Hence, $e\in E_c$. This implies that $w_2\notin C$, as otherwise $N+e-f-g$ matches more places of critical vertices with critical edges, than $N$, contradiction. So there is no $y_0(e),y_1(e)$ copy. Combining this, we obtain that either $z_{s+5}(e)\in M'$ or $z_{s+6}(e)\in M'$. In both cases, it follows that $e$ \cgblocks\ $N$, contradiction. 

The case $f\notin E_c,g\in E_c$ is similar.
\end{claimproof}

The running time follows from the fact that Fleiner's algorithm runs in time $\mathcal{O}(|E'|^2)=\mathcal{O}((s+t)^2|E|^2)$-time. In the case, when the matroids $\mathcal{M}_v$ are simple laminar matroids, coming from for example vertex capacities $q(v)$, the matroids in $I'$ are laminar, hence the running time of Fleiner's algorithm can be reduced to $\mathcal{O}((s+t)|E|)$ and so is our algorithm's.
The theorem now follows.
\end{proof}

%With the same proof of showing that $M$ is critical, we also obtain a proof of the following matroid theorem.
%\begin{theorem}
%%Let $\mathcal{M}_1$ be a matroid over $E$ satisfying that for each $u\in U$, $E(u)$ is a union of connected components of $\mathcal{M}_1$. Similarly, let $\mathcal{M}_2$ be a matroid over $E$ satisfying that for each $w\in W$, $E(w)$ is a union of connected components of $\mathcal{M}_2$. Then, if there is a common independent set $M_1$ such that $|M_1\cap U_1|\ge k$ and another such that $|M_2\cap W_1|\ge \ell$, then there is also a common independent set $M$, such that $|M\cap U_1|\ge k$ and $|M\cap W_1|\ge \ell$, where $|M\cap V|$ denotes the number of edges in $M$ adjacent to vertices in $V$.
%\end{theorem}

\begin{remark}
    Even this algorithm is straghtforward to extend to work for the case when $k$ $(\gamma_e^v)_i$ numbers are given, but the number of required copies make the analysis more technical and tedious. 
\end{remark}

\section{Conclusion and Future Work}

In this paper we demonstrated the robust usefulness and generality of the edge duplicating technique and provided simple algorithms for many \smti\ generalizations at the same time. As there are probably even more cases, where this technique could provide a nice solution, there is a huge potential of applying it to other lesser known or yet to be introduced models.

\section{Acknowledgements}
The author was supported by the Hungarian Scientific Research Fund, OTKA, Grant No. K143858, by the Momentum Grant of the Hungarian Academy of Sciences, grant number 2021-1/2021 and by the Ministry of Culture and Innovation of Hungary from the National Research, Development and Innovation fund, financed under the KDP-2023 funding scheme (grant number C2258525).

\printbibliography
\end{document}